\DeclareMathOperator{\Prob}{\mathbb{P}}
\DeclareMathOperator{\E}{\mathbb{E}}
\def\BState{\State\hskip-\ALG@thistlm}
\newtheorem{lemma}{Lemma}
\newtheorem{definition}{Definition}
\tikzset{radiation/.style={{decorate,decoration={expanding waves,angle=90,segment length=4pt}}}}
\tikzset{cross/.style={cross out, draw=black, minimum size=2*(#1-\pgflinewidth), inner sep=0pt, outer sep=0pt},
	cross/.default={4pt}}
\newcommand{\copyrightstatement}{
	\begin{textblock}{0.84}(0.08,0.01)    
		\noindent
		\footnotesize
		\copyright 2018 IEEE. This is the authors' version of the article. Personal use of this material is permitted. Permission from IEEE must be obtained for all other uses, in any current or future media, including reprinting/republishing this material for advertising or promotional purposes, creating new collective works, for resale or redistribution to servers or lists, or reuse of any copyrighted component of this work in other works.
	\end{textblock}
}
\begin{document}
\copyrightstatement
\title{On the Resource Consumption of M2M Random Access: Efficiency and Pareto Optimality\vspace{0.2cm}}
\author{Mikhail Vilgelm, Sergio Rueda Li\~nares, and Wolfgang Kellerer
\thanks{The authors are with the Chair of Communication Networks, Technical University of Munich, Germany. The work has been supported in part by the German Research Foundation (DFG)
	grant KE1863/5-1.}
\thanks{Corresponding author: M. Vilgelm (mikhail.vilgelm@tum.de).}
\vspace{-0.4cm}} 

\maketitle

\begin{abstract}
The advent of Machine-to-Machine communication has sparked a new wave of interest to random access protocols, especially in application to LTE Random Access (RA). By analogy with classical slotted ALOHA, state-of-the-art models LTE RA as a multi-channel slotted ALOHA. In this letter, we direct the attention to the resource consumption of RA. We show that the consumption is a random variable dependent on the contention parameters. We consider two approaches to include the consumption into RA optimization: by defining resource efficiency and by the means of a bi-objective optimization, where resource consumption and throughput are the competing objectives. We then develop the algorithm to obtain Pareto-optimal RA configuration under resource constraint. We show that the algorithm achieves lower burst resolution delay and higher throughput than the state-of-the-art.
\end{abstract}

%
\IEEEpeerreviewmaketitle

\vspace{-0.3cm}
\section{Introduction}
\IEEEPARstart{3}{GPP} LTE networks have multiple benefits if used as a basis for novel 5G applications, in particular for Machine-to-Machine (M2M) communications. Existing infrastructure, wide coverage area, and mature standardization mechanisms make them a leading candidate for M2M roll-outs. However, inherently sporadic and low datarate communication patterns have been shown to create a major bottleneck in the LTE Random Access Channel (RACH)~\cite{TR37868}. This observation sparked a new wave of interest to Random Access (RA) research, a field of study known since 1970s. In particular, RA is susceptible to high delays and low throughput in the case of synchronized burst arrivals, common for many M2M applications. Modeling RACH as multichannel slotted ALOHA (MS-ALOHA)~\cite{wei2015}, where a single Physical RACH (PRACH) preamble is considered as a \textit{channel}, inspired researchers to revisit classical S-ALOHA techniques, e.g., Rivest~\cite{rivest1987network}, and adapt these to MS-ALOHA. By analogy with S-ALOHA, barring, back-off, and conflict resolution techniques are applied to maximize the throughput of the RACH procedure~\cite{laya2014random}. 

A largely neglected aspect is the efficiency of LTE RA with respect to \textit{consumed time-frequency resources}. {\color{black}In the state of the art, resource consumption is either omitted or defined only in terms of the number of allocated preambles~\cite{laya2014random,7812676}. However, unlike a standard MS-ALOHA, LTE RA is a 4-step handshake (see Fig.~\ref{fig:resources}a). Since the collision occurs at its third step (MSG3), additional Physical Uplink Shared CHannel (PUSCH) resources are spent on MSG3 per every activated preamble, collided or successful. Therefore, random outcome of a preamble contention, i.e., the split between activated and idle preambles, directly influences the resource consumption. This makes \textit{idle preambles more favorable than collided in terms of the resource consumption} (Fig.~\ref{fig:resources}b). Thus, the state-of-the-art definition of resource consumption is incomplete and has to be revisited.}

In summary, this letter is motivated by three observations. First, resource consumption as a metric is ignored in the state of the art on M2M RA. Second, resource consumption is dependent on the outcome of the contention, and, hence, on the contention parameters. Third, as we show in the course of the letter, state-of-the-art algorithms are not optimal for the resource constrained optimization of RA.

In this letter, following these observations, we define the resource consumption of LTE RA (Sec.~\ref{sec:efficiency}) and derive its dependency from the contention parameters (access probability and number of allocated preambles). We then present and compare two approaches to take into account resource consumption: based on resource efficiency and based on Pareto optimality (Sec.~\ref{sec:pareto}). The latter approach is then developed into Pareto Optimal Channel allocation -- Access barring (POCA) algorithm for optimizing RA performance under resource constraints (Sec.~\ref{sec:jointlyopt}).

\vspace{-0.25cm}
\section{System Model}
\label{sec:system_model}

We consider a burst arrival scenario, where a large number of User Equipments (UEs) send requests to a single Base Station (BS) within a short period of time~\cite{TR37868}. Requests are resolved by the MS-ALOHA protocol, with $M$ preambles available per contention round. We assume a collision channel without capture and with ternary per-preamble feedback at the end of every contention round: idle ($0$ UEs chose a given preamble), singleton ($1$ UE), and collision ($>1$). 

Now consider an arbitrary contention round. Prior to it, $n$ backlogged UE, accounting for both previously unsuccessful and newly activated UEs, are competing for $M$ preambles. Access Class Barring (ACB) is applied with access probability $p$ (i.e. geometric random back-off). Instantaneous performance in a contention round is characterized with two performance metrics: number of successful UEs $s$, and \textit{resource consumption} $r$. We define the expectation of $s$ as \textit{throughput} $S$, which is a function of $n$, $p$, and $M$:
\begin{eqnarray}
S(n; p, M) \triangleq \E[s].
\label{eqn:expfunction}
\end{eqnarray}

The respective single contention round problem is:
\begin{eqnarray}
\underset{p,M}{\text{maximize }} S(n; p, M), \label{eqn:optimalsuccess}
\end{eqnarray}

where the $p$ and $M$ are contention parameters and $n$ is the input. Typically, only $p$ is considered as a parameter to be adjusted. In general, $M$ can also be adjusted dynamically and communicated to the UEs via BS broadcast~\cite{7812676,7404058}. BS can control the number of PRACH allocations per frame, thus reducing or increasing the number of available preambles. $M$ might also be set lower than the PRACH allocation allows, as the remaining preambles might be used for contention-free RACH or for another QoS class~\cite{7812676}. Note that solving the problem~\eqref{eqn:optimalsuccess} requires the knowledge of $n$, which is typically not available. Instead, the estimation of $n$ is used~\cite{newCollision,7404058}. For simplicity, we do not make a distinction between $n$ and its estimate throughout the next sections, but we use the estimation for evaluations in Sec.~\ref{sec:evaluation}.
\begin{figure}
	\centering
	\includegraphics[width=1\linewidth]{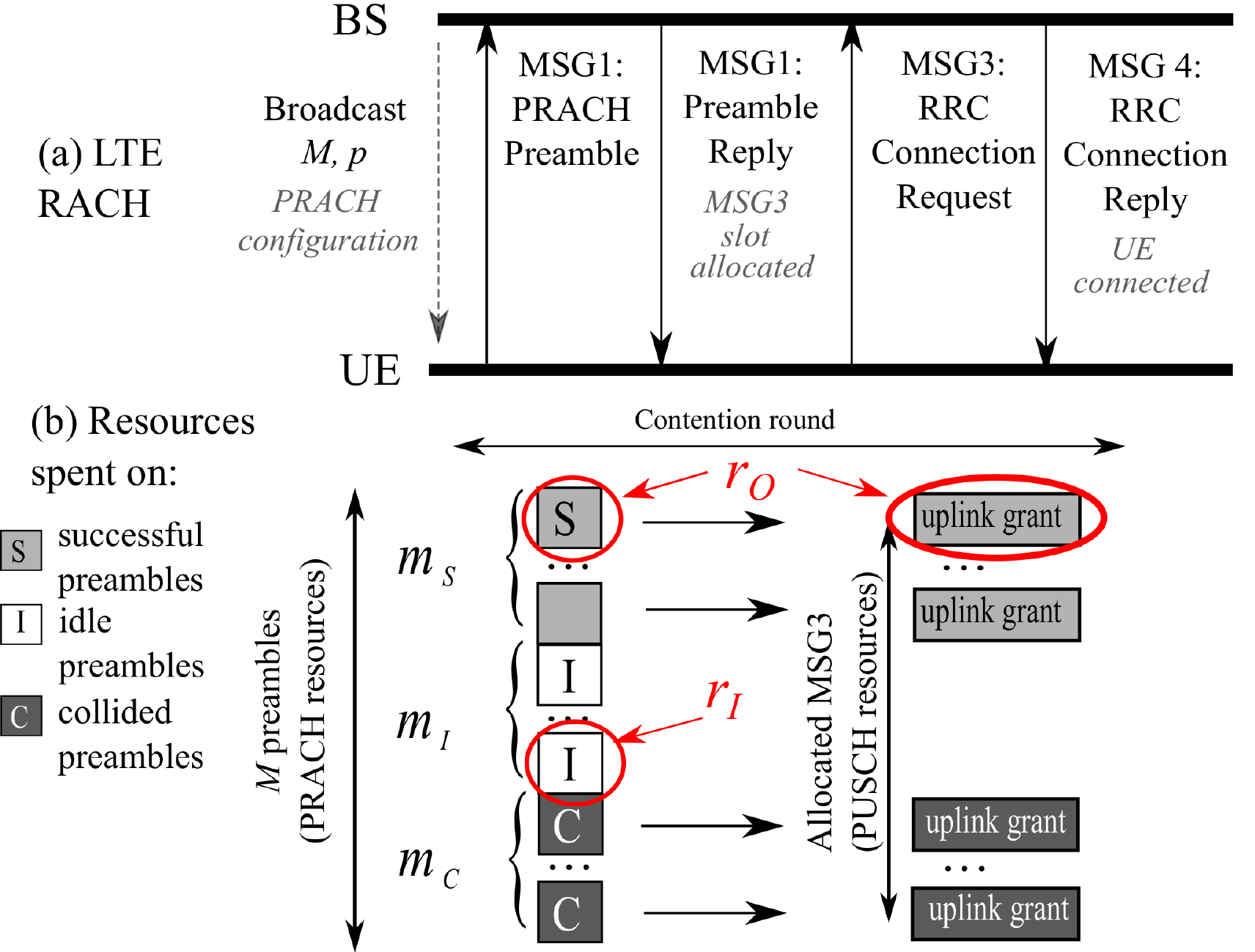}
	\caption{(a) Four step LTE RA; (b) Components of uplink resource consumption in LTE RA.}
	\label{fig:resources}
	\vspace{-0.5cm}
\end{figure}
\vspace{-0.3cm}
\subsection{Related Work}
Following an analysis similar to~\cite{wei2015}, it is straightforward to see that in the case of access barring without the re-transmission limit, the function~\eqref{eqn:expfunction} is expressed as follows:
\begin{equation}
S(n; p, M) = np\left(1-\frac{p}{M}\right)^{n-1}.
\label{eqn:exp-success}
\end{equation}
State-of-the-art papers search for the optimal access probability $p^\star$ and the number of channels $M^\star$ in various settings. In~\cite{7875393}, the authors propose a dynamic adaptation of access probability maximizing $S$
\begin{equation}
p^\star=\min(1,M/n)\label{eqn:optp}.
\end{equation}
A similar policy is adopted by Duan~\textit{et al.}~\cite{7404058}, with the additional step of adjusting the available preambles $M^\star$.

The state-of-the-art assumes that idle preambles are equal to collided in terms of the resource consumption and does not consider the resource constraint. In the next section, we define the resource consumption, and, based on it, redefine the RACH efficiency. Afterwards, we revisit the state-of-the-art policy~\eqref{eqn:optp}, and show that its performance is not optimal anymore if we account for the resource consumption\footnote{{\color{black}Due to space limitations, only most relevant papers are cited here. We refer the reader to~\cite{laya2014random,7812676} for extended review of RACH optimization.}}.
\vspace{-0.2cm}
\section{Efficiency and Pareto Optimality}
\label{sec:efficiency}

The total amount of uplink resources $r$ consumed by RACH in a contention round consists of two parts. First part are the resources consumed by PRACH preambles (MSG1). Second part are PUSCH resources: Every activated (occupied with $\geq 1$ UE) preamble is followed by an additional PUSCH allocation for MSG3. Since the number of occupied preambles is a random variable which depends on the contention parameters, consumed resources are also a random variable. 

To obtain $r$, we first define the vector $\mathbf{m}=[m_C,m_S,m_I]$, such that $M=m_C+m_S+m_I$, as an outcome of an arbitrary contention round, i.e. a ``split'' of $M$ available preambles into collided, successful, and idle (see Fig.~\ref{fig:resources}b). We define variables $r_I$ and $r_O$ as the amount of uplink resources spent per every idle or occupied preamble, respectively (assuming that collision and success consume equal resources). As illustrated in Fig.~\ref{fig:resources}b, $r_I$ includes only PRACH resources spent on 1 preamble, and $r_O$  includes PRACH+PUSCH resources\footnote{E.g., for LTE RACH: $r_I=6\text{ RBs}/64\text{ preambles}\approx 0.09$ RBs (only MSG1), $r_O=r_I+1\text{ RB}\approx 1.09$ RBs (MSG1+MSG3).}:
\begin{equation}
r \triangleq (m_S + m_C)r_O + m_Ir_I,
\end{equation}
which makes instantaneous consumption $r$ a random variable dependent on $\mathbf{m}$, with expectation $R(n; p, M)\triangleq \E[r]$.
\begin{definition}[Efficiency]
	The efficiency of RACH is defined as the ratio of the throughput to the expected resource consumption per contention round:
	\begin{eqnarray}
	T(n;p,M) &\triangleq& \frac{S}{R}.\label{eqn:efficiency}
	\end{eqnarray}	
\end{definition}
For the system model, the efficiency is found as:
\begin{eqnarray}
T(n;p,M) = \frac{S}{r_OM-\E[m_I](r_O-r_I)}.\label{eqn:efficiency_definition}
\end{eqnarray}
With $S$ given via~\eqref{eqn:exp-success}, it remains to find the expression for the expected number of idle preambles $\E[m_I]$. 

\begin{lemma}
	\label{lem:expnonidle}
	Given $n$ backlogged UEs, access probability $p$, the expected number of idle preambles $\E\left[m_I\right]$ in the contention round is:
	\begin{eqnarray}
	&&	\E\left[m_I\right] = M\left(1-\frac{p}{M}\right)^{n}\label{eqn:expectedidle}
	\end{eqnarray}
\end{lemma}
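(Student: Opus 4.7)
The plan is to compute $\E[m_I]$ by decomposing $m_I$ into per-preamble indicators and exploiting linearity of expectation, avoiding the need to work with the joint distribution of $\mathbf{m}=[m_C,m_S,m_I]$ directly. Concretely, for each preamble index $i \in \{1,\dots,M\}$, I would introduce $X_i \in \{0,1\}$, the indicator that preamble $i$ is idle (chosen by zero UEs in the current round). Then $m_I = \sum_{i=1}^M X_i$ and $\E[m_I] = \sum_{i=1}^M \Pr(X_i = 1)$.

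Next, I would compute $\Pr(X_i = 1)$ from the ACB/selection mechanism described in Sec.~\ref{sec:system_model}. A given UE activates with probability $p$ and, conditional on activation, picks a preamble uniformly at random out of $M$; so the probability that a single UE selects preamble $i$ is $p/M$, and the probability that it does \emph{not} select preamble $i$ is $1 - p/M$. Since the $n$ backlogged UEs make independent back-off and preamble-selection decisions, the probability that none of them choose preamble $i$ is $(1 - p/M)^n$, i.e., $\Pr(X_i = 1) = (1 - p/M)^n$.

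By symmetry across preambles, summing over $i$ yields
\begin{equation}
\E[m_I] \;=\; \sum_{i=1}^M \Pr(X_i = 1) \;=\; M\left(1-\frac{p}{M}\right)^{n},
\end{equation}
which is exactly~\eqref{eqn:expectedidle}.

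There is no real obstacle here: the indicators $X_i$ are \emph{not} independent (the totals $m_C, m_S, m_I$ are constrained to sum to $M$), but linearity of expectation does not require independence, only that the marginal probabilities $\Pr(X_i=1)$ are computed correctly. The only subtlety worth flagging explicitly in the write-up is that each UE's choice induces independence \emph{across UEs} (not across preambles), which is what makes the exponent $n$ rather than something involving $M$, and that $(1-p/M)^n$ is well defined since $p/M \in [0,1]$ by the assumptions on $p$ and $M$.
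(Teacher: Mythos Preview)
Your proof is correct and follows essentially the same approach as the paper: introduce per-preamble idle indicators, compute the marginal probability $(1-p/M)^n$ that a given preamble is not selected by any of the $n$ independent UEs, and sum via linearity of expectation. Your additional remarks about not needing independence among the indicators and about where the exponent $n$ comes from are accurate and, if anything, make the argument slightly more complete than the paper's terse version.
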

\begin{proof}
	{\color{black}Denote by $y_{j}$ the binary random variable indicating whether a preamble $j$ is idle in a given round. Any preamble is idle with probability 
		$	\Prob[y_{j}=1] = \left(1 - \frac{p}{M} \right)^{n}.$
		Using the sum of expectations $\E[m_I]=M\sum_{j}\E\left[y_{j}\right]=M\sum_{j}\Prob\left[y_{j}=1\right]$, we obtain~\eqref{eqn:expectedidle}.}
\end{proof}

Using Eqn.~\eqref{eqn:exp-success},~\eqref{eqn:efficiency_definition}, and the results of the lemma, we get
\begin{align}
&R=M\left(r_O-(r_O-r_I)\left(1-\frac{p}{M}\right)^{n}\right),\label{eqn:exp_consumption}\\
&T(n;p,M)= \frac{np\left(1-\frac{p}{M}\right)^{n-1}}{M\left(r_O-(r_O-r_I)\left(1-\frac{p}{M}\right)^{n}\right)}.
\label{eqn:throughputnew}
\end{align}

Our definition captures the difference in the resource consumption of idle and occupied channels, by weighting them differently in the expected outcome.

To illustrate the difference between throughput and efficiency, we plot the functions~\eqref{eqn:exp-success} and~\eqref{eqn:throughputnew} in Fig.~\ref{fig:norm_thr} against $p$ for different values of $M$. We observe that for the same value of $M$, different values of access probability $p^\star$ are needed to maximize $S$ and $T$. This result comes from the fact that the consumed resources $R$ are coupled with the channel split $\mathbf{m}$, making a collision less favorable, and, hence, reducing the access probability maximizing the efficiency. Consequently, the policy~\eqref{eqn:optp} is suboptimal in terms of efficiency $T$.

One approach to use this result would be to design an algorithm adjusting $(p,M)$ to maximize the efficiency. However, computing a jointly optimal solution is a non-linear mixed-integer problem and requires numerical methods. The worst-case complexity of such algorithm is not guaranteed to be polynomial. {\color{black}Additionally, since efficiency is a composite objective function, its usage presents \textit{a compromise between two competing metrics:} throughput and resource consumption. In the next sections, we present an alternative approach: To explore the contradicting nature of both metrics, we treat each of them as a separate objective in the framework of a bi-objective optimization problem.
	
	\begin{figure}[t!]
		\centering
		\includegraphics[trim={0 0.7cmm 0 0},clip,width=1\linewidth]{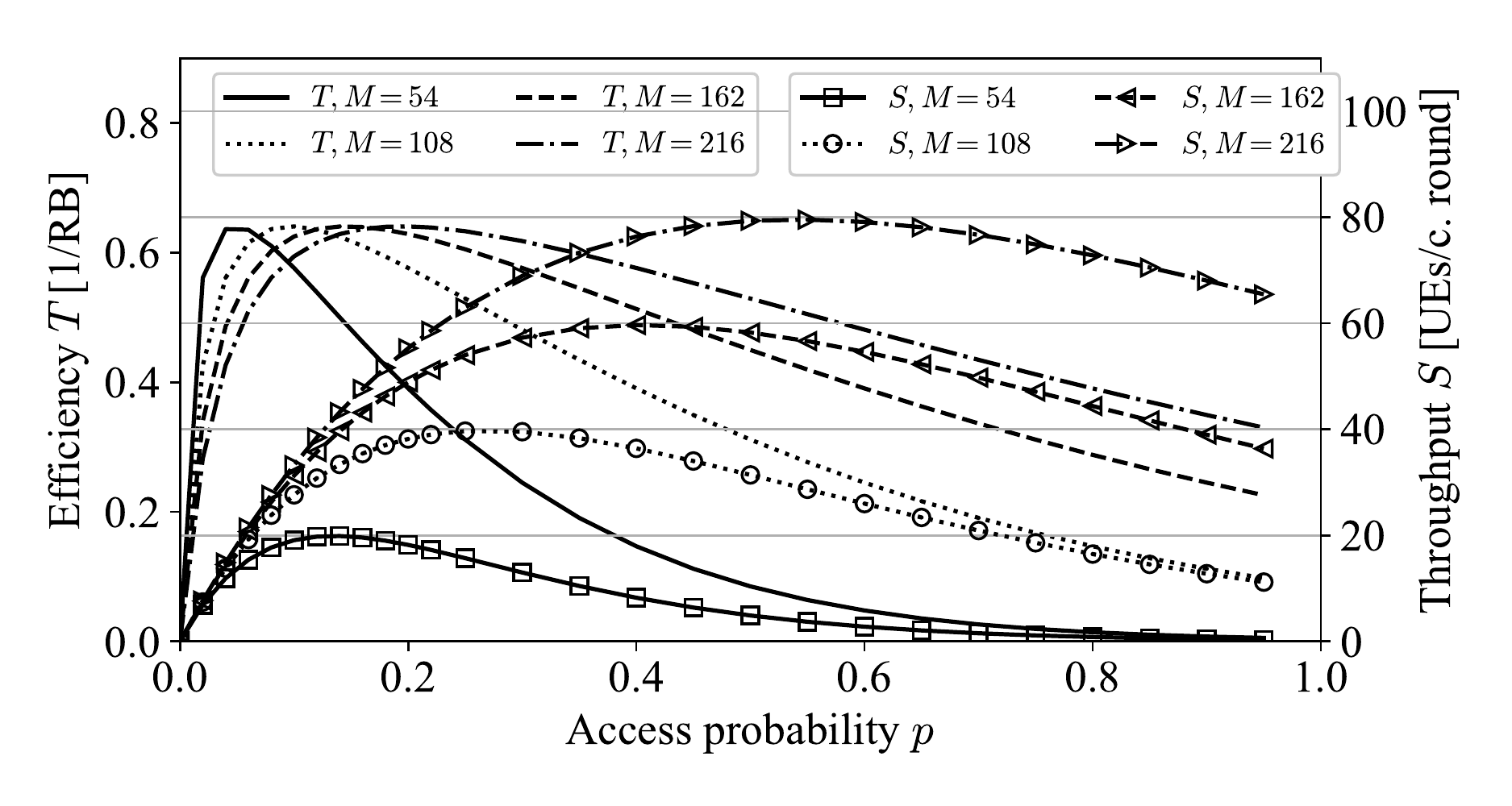}
		\caption{Resource efficiency $T$ and throughput $S$ vs. access probability $p$ for different $M$ values. $r_I=0.09$, $r_O=1.09$, $n_i=400$.}
		\label{fig:norm_thr}
		\vspace{-0.4cm}
	\end{figure}
	\vspace{-0.2cm}
	\subsection{RACH as Bi-Objective Optimization}
	\label{sec:pareto}
	We formulate the bi-objective optimization problem as follows:}
\begin{subequations}
	\label{eqn:multiobj}
	\begin{align}
	\underset{p,M}{\text{maximize }}&\left\lbrace S(n; p, M),-R(n; p, M)\right\rbrace,\label{eqn:pareto}\\
	\text{s.t. }& 
	p\in(0,1]\label{eqn:constr1}\\
	& M = km,\text{ }k\in\mathbb{N}_{++},\label{eqn:constr2}
	\end{align}
\end{subequations}
With~\eqref{eqn:constr2}, we impose an arbitrary constraint on the preamble allocation granularity: Preambles must be allocated as multiple integer of $m\geq 1$.

Now, we are looking for the Pareto set: values of $\left(p,M\right)$ for which none of the two objective functions can be increased without decreasing another objective. The sample solution space with the Pareto set is illustrated in Fig.~\ref{fig:constrained}. Every curve corresponds to a fixed value $M$ with varying $p$ for individual points. All the points on the lower border of the solution space form the Pareto set. We observe that the points $p^\star=\min\left(1,\frac{M}{n}\right)$, delivered by the state-of-the-art ACB policy~\eqref{eqn:optp}, do not belong to the Pareto set and hence are sub-optimal. The optimality gap can be read from Fig.~\ref{fig:constrained} as a respective projection of $p^\star$ points on the Pareto frontier.

{\color{black}The problem~\eqref{eqn:multiobj} can be solved by \textit{scalarization}~\cite{miettinen2008introduction}: Converting it into a single objective problem using the preferences between the objectives. We choose the $\epsilon$-constraint method: Set a constraint $\epsilon$ on one objective, and optimize for the second. A constraint can be set either on the minimum throughput or on the maximum resource consumption, depending on the practical use case and preferences of a generic decision maker. We take the second approach as an example and devise a practical algorithm to implement it in the next section.
	
	\begin{figure}
		\centering
		\includegraphics[trim={0 0.7cm 0 0},clip,width=1\linewidth]{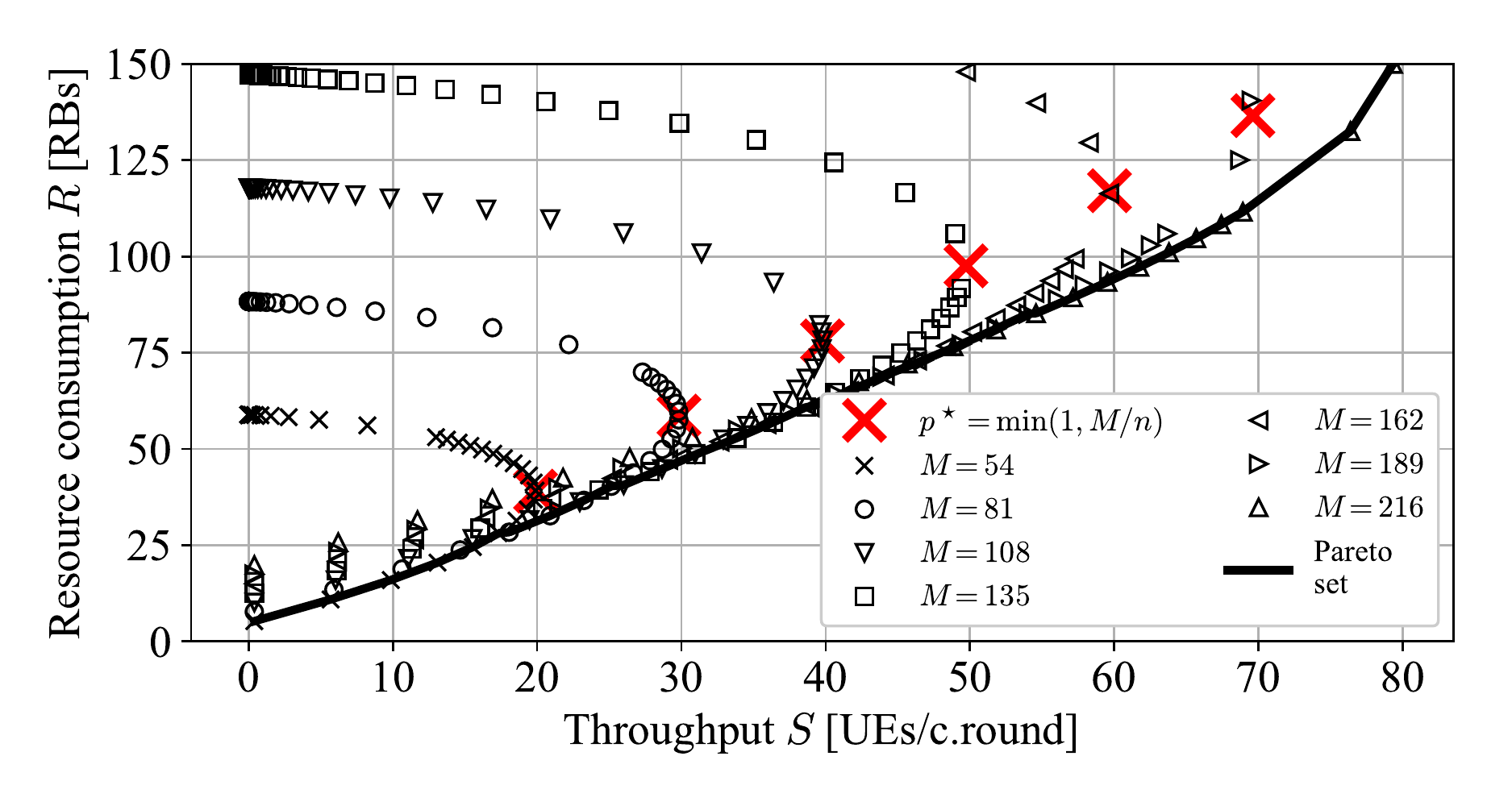}
		\caption{Functions $R$ and $S$ for varying $(p,M)$ as defined in~\eqref{eqn:multiobj} with Pareto set. ACB policy~\eqref{eqn:optp} is shown to deliver solutions for $p^\star$ diverging from the Pareto set. $r_I=0.09$, $r_O=1.09$, $n=400$ UEs.	}
		\label{fig:constrained}
		\vspace{-0.4cm}
	\end{figure}

\begin{figure*}[t!]
	\centering
	\includegraphics[trim={0 0.6cmm 0 0},clip,width=\linewidth]{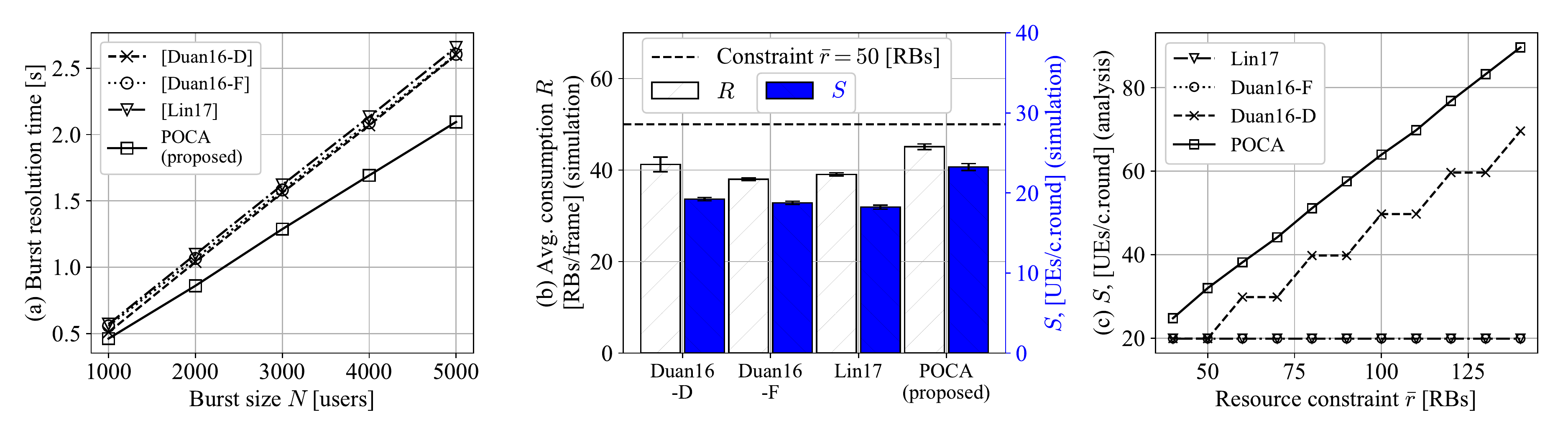}
	\caption{Evaluation results: (a) avg. burst resolution time (simulated); (b) avg. resource consumption $R$ and avg. throughput $S$ (simulated with burst size $N=2000$ UEs); (c) Throughput vs. constraint (analysis). Activation time $10$~ms, $\bar{r}=50$ RBs, $r_O=1.09$ RBs, $r_I=0.09$ RBs.}
	\vspace{-0.45cm}
	\label{fig:evaluation}
\end{figure*}

\vspace{-0.2cm}
\section{POCA: Pareto Optimal Channel allocation -- Access barring algorithm}
\label{sec:jointlyopt}

In this section, we design an algorithm to solve the problem~\eqref{eqn:multiobj} applying $\epsilon$-constraint method~\cite{miettinen2008introduction}. We treat RACH as a constrained optimization problem, maximizing the throughput $S$ given a certain constraint $\epsilon\equiv\bar{r}$ on the expected resource consumption\footnote{Separate constraints can also be enforced on PRACH and PUSCH resources. We argue however that a total constraint on the resources spent on connection establishment is more interesting. Both approaches can be accommodated in the framework with minor modifications.}. Since the channel split $\mathbf{m}$ is random, $\bar{r}$ is a constraint on expectation, which is a \textit{soft constraint}. It is a constraint which can be imposed by the desired dimensioning of resources in a system with dynamic scheduling (as in LTE). If, instead, desired is a hard constraint, it should be accommodated into $S$ and treated as unconstrained optimization problem instead~\cite{wei2015}.}

\vspace{-0.2cm}
\subsection{Constrained Optimization Problem}
By imposing the constraint $R\leq\bar{r}$, we reformulate~\eqref{eqn:optimalsuccess} as:\vspace{-0.15cm}
\begin{align}
\underset{p,M}{\text{maximize }}&S(n; p, M),\label{eqn:optimization}\\
\text{s.t. } & R\leq \bar{r},\text{ and~\eqref{eqn:constr1},~\eqref{eqn:constr2}}.\nonumber
\end{align}

The optimization problem is non-linear and mixed-integer, but a polynomial time solution can be found. First, note that for a fixed $M$, corresponding $p^\star$ has a closed form solution.

\begin{lemma}
For fixed $M$, the optimal solution $p^\star\left(M\right)$ to the problem~\eqref{eqn:optimization} is found as:
\begin{equation}
p^\star\left(M\right)=\min\left(\frac{M}{n},p_{\max}\right),\label{eqn:pstar}
\end{equation}
\begin{equation}
\text{where}\quad p_{\max}\triangleq \begin{cases}
M-M\left(\frac{r_O-\bar{r}/M}{r_O-r_I}\right)^{\frac{1}{n}} &\quad\text{if }r_O\geq \bar{r}/M,\\
1 &\quad\text{if }r_O<\bar{r}/M.\nonumber
\end{cases}
\end{equation}
\end{lemma}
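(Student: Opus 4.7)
The plan is to exploit two simple monotonicity facts about $S$ and $R$ as functions of $p$ (with $M$ and $n$ held fixed), and then read off the maximizer by intersecting the unconstrained maximum with the feasible set.

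First I would analyze $S(n;p,M) = np(1-p/M)^{n-1}$. Differentiating with respect to $p$ and factoring yields $\frac{\partial S}{\partial p} = n(1-p/M)^{n-2}\bigl(1 - np/M\bigr)$, so $S$ is strictly increasing on $(0,M/n)$ and strictly decreasing on $(M/n,M)$, with unique interior maximizer $p = M/n$. Ignoring the resource constraint, the maximizer over $p \in (0,1]$ is thus $\min(1, M/n)$, which recovers the state-of-the-art policy~\eqref{eqn:optp}.

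Next, I would study $R(n;p,M) = M\bigl(r_O - (r_O-r_I)(1-p/M)^n\bigr)$ from~\eqref{eqn:exp_consumption}. Since $r_O > r_I$, a direct differentiation gives $\frac{\partial R}{\partial p} = n(r_O-r_I)(1-p/M)^{n-1} > 0$ for $p \in (0,M)$, so $R$ is strictly increasing in $p$. Hence the constraint $R \leq \bar r$ is equivalent to $p \leq p_{\max}$, where $p_{\max}$ is the (unique) solution of $R(n;p,M) = \bar r$ if such a solution exists in $(0,1]$, and $p_{\max} = 1$ otherwise. Solving $M\bigl(r_O - (r_O-r_I)(1-p/M)^n\bigr) = \bar r$ for $p$ yields the closed-form expression in the statement, provided the argument of the $n$-th root is nonnegative, i.e.\ $r_O \geq \bar r / M$; when $r_O < \bar r / M$, the equation has no solution because even at $p = 1$ we have $R < \bar r$, so the constraint is inactive and we simply set $p_{\max} = 1$.

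Finally, I would combine the two ingredients. Because $S$ is strictly increasing on $(0,M/n)$ and the feasible set for $p$ is the interval $(0, p_{\max}]$ (intersected with $(0,1]$, already absorbed into $p_{\max}$ by the case split above), the constrained maximizer is
\begin{equation*}
p^\star(M) = \min\bigl(M/n,\ p_{\max}\bigr),
\end{equation*}
which is the claim. The only delicate point, and the one I expect to be the main obstacle to state cleanly, is the case analysis on the sign of $r_O - \bar r/M$: it governs whether the resource bound is active somewhere in $(0,1]$ and, correspondingly, whether $p_{\max}$ is given by the explicit inversion formula or by the boundary value $1$. Everything else reduces to verifying the monotonicity/unimodality of two smooth functions and inverting a strictly monotone map.
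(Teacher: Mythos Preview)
Your argument is correct and is precisely the ``reformulation of the constraints'' that the paper invokes as its entire one-line proof; you have simply made explicit the unimodality of $S$ in $p$, the strict monotonicity of $R$ in $p$, and the inversion of $R(p)=\bar r$ yielding $p_{\max}$. The only quibble is your parenthetical that the bound $p\le 1$ is ``already absorbed into $p_{\max}$ by the case split'': in the branch $r_O\ge \bar r/M$ the inversion formula can in principle exceed $1$, but the lemma as stated in the paper omits a clamp to $1$ as well (tacitly assuming the bursty regime $n\ge M$), so your derivation matches the target statement exactly.
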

\begin{proof}
Obtained by re-formulating the constraints.
\end{proof}

Second, the solution space with respect to $M$ is limited by the granularity $m$ and by $M_{\max}=\bar{r}/r_I$, obtained by setting $p=0$. Hence, the solution to the full problem reduces to the search in $k$, with computation of $p^\star$ for every iteration. Resulting complexity of the algorithm is hence at most linear $\mathcal{O}(k_{\max})$, where $k_{\max}=\lfloor\frac{\bar{r}}{r_Im}\rfloor$. The pseudocode for POCA is outlined in Alg.~\ref{alg:proposed_algorithm}.

\begin{algorithm}[b!]
{\footnotesize\caption{{\footnotesize POCA: Pareto Optimal Channel allocation -- Access barring}}\label{alg:proposed_algorithm}
	\begin{algorithmic}[1]
		\For {every contention round}
		\State Input: $n$, $\bar{r}$, $m$ (resource granularity). Set: $\hat{k}\gets k_{\max}$, $M^\star\gets\hat{k}m$; 
		\State Compute $p^\star=f(M^\star)$ via~\eqref{eqn:pstar}, $S^\star=f(n; p^\star,M^\star)$ via Eqn.~\eqref{eqn:exp-success}.
		\While {$\hat{k}>0$}
		\State Set: $\hat{k}\gets\hat{k}-1$, $\hat{M}\gets\hat{k}m$.
		\State Compute $\hat{p}= f(\hat{M})$ via~\eqref{eqn:pstar}, and $\hat{S}= f(n; \hat{p},\hat{M})$ via Eqn.~\eqref{eqn:exp-success}.
		\If {$\hat{S}>S^\star$}
		\State Set: $p^\star\gets\hat{p}$, $M^\star\gets\hat{M}$, $S^\star\gets\hat{S}$
		\EndIf
		\EndWhile 
		\Return $p^\star, M^\star$
		\EndFor
\end{algorithmic}}\vspace{-0.0cm}
\end{algorithm}
\vspace{-0.2cm}
\subsection{Evaluation and Benchmarking}
\label{sec:evaluation}
As benchmarks for POCA we choose preudo-bayesian approach (Lin17)~\cite{7875393}, dynamic access barring with fixed resource allocation (Duan16-F) and with dynamic resource allocation (Duan16-D)~\cite{7404058}. The first two, Lin17 and Duan16-F, do not adjust the number of preambles and only tweak the access probability, while the latter is optimizing both $p, M$. We aided the benchmarked algorithms with explicit constraint on the consumption $\hat{r}$. We set the constraint high enough for Duan16-F and Lin17 to deliver feasible solution. For Duan16-D, we find the best solution satisfying the constraint via exhaustive search. We have compared the performance in terms of burst resolution time, throughput and average resource consumption for a burst arrival scenario~\cite{TR37868}. All algorithms require knowledge of current backlog $n$, i.e., how many nodes will attempt the transmission in the next step. The estimation based on the observations of the channel split $\mathbf{m}$~\cite{7875393} is used here, as it performed best in our simulations. The simulation set-up follows the assumptions in~\ref{sec:system_model}, capturing only the MAC layer effects with parameters summarized in Fig.~\ref{fig:evaluation}.

In Fig.~\ref{fig:evaluation}a, average burst resolution time is plotted as a function of the burst size $N$. We observe that for $N=1000$ UEs the proposed algorithm achieves $9$\% lower time compare to the closest Duan16-D algorithm, and the gain grows to $19$\% for large bursts of $N=5000$ UEs. To study why is it the case, we plot the measured throughput and average resource consumption per contention round in Fig.~\ref{fig:evaluation}b for an exemplary burst size $N=2000$. We observe that POCA achieves higher throughput, and better utilizes the resources within the constraint $\bar{r}$. The deficiency of the benchmark solutions comes from the sub-optimality of the policy~\eqref{eqn:optp}, which we have demonstrated earlier. In contrast to it, POCA jointly considers channel allocation and access probability, hence, delivering the solutions from the Pareto set. We observe the difference in throughput as a function of the resource constraint in Fig.~\ref{fig:evaluation}c. Duan16-F and Lin17 do not adjust $M$, hence they achieve same throughput independent on the constraint. The throughput difference between Duan16-D and POCA is growing with the constraint, which is well inline with the observations in Fig.~\ref{fig:constrained}, where the optimality gap is increasing with resource consumption.
\vspace{-0.2cm}
\section{Conclusions}
In this letter, we have revisited the topic of the performance optimization of LTE RA. We have refined the definition of the resource consumption as a function of the contention parameters. We presented two approaches to accommodate the consumption in the optimization: by the means of efficiency or via the bi-objective optimization framework. We have shown that the state-of-the-art policies are suboptimal in terms of both efficiency and Pareto-optimality. We have presented an alternative algorithm Pareto-Optimal Channel allocation -- Access barring (POCA), and demonstrated that it is a superior approach for RA throughput optimization under resource constraint, achieving lower burst resolution delay.

\ifCLASSOPTIONcaptionsoff
  \newpage
\fi
\bibliographystyle{IEEEtran}
\vspace{-0.3cm}

\end{document}